\newtheorem{assumption}{\textbf{Assumption}}
\newtheorem{theorem}{\textbf{Theorem}} 
\newtheorem{corollary}[theorem]{\textbf{Corollary}}
\newtheorem{proposition}{\textbf{Proposition}}
\newtheorem{definition}{\textbf{Definition}}
\newtheorem{remark}{\textbf{Remark}}
\title{\LARGE \bf
Optimal Modified Feedback Strategies in LQ Games under Control Imperfections
}
\author{Mahdis Rabbani$^{1}$, Navid Mojahed$^{1}$, and Shima Nazari$^{1}$
\thanks{*This work was not supported by any organization.}
\thanks{$^{1}$Mahdis Rabbani, Navid Mojahed, and Shima Nazari are with the Mechanical and Aerospace Engineering,
        University of California, Davis, 1 Shields Ave., Davis, 95616 CA, USA
        {\tt\small \{mrabbani, nmojahed, snazari\}@ucdavis.edu}}%
}
\begin{document}

\addtolength{\topmargin}{2mm}   

\maketitle
\thispagestyle{empty}
\pagestyle{empty}

\begin{abstract}
Game-theoretic approaches and Nash equilibrium have been widely applied across various engineering domains. However, practical challenges such as disturbances, delays, and actuator limitations can hinder the precise execution of Nash equilibrium strategies. This work investigates the impact of such implementation imperfections on game trajectories and players' costs in the context of a two--player finite--horizon linear quadratic (LQ) nonzero-sum game. Specifically, we analyze how small deviations by one player, measured or estimated at each stage affect the state trajectory and the other player’s cost. To mitigate these effects, we construct a compensation law for the influenced player by augmenting the nominal game with the measurable deviation dynamics. The resulting policy is
shown to be optimal within a causal affine policy class, and,
for sufficiently small deviations, it locally outperforms the
uncompensated equilibrium-derived feedback. Rigorous analysis
and proofs are provided, and the effectiveness of the proposed
approach is demonstrated through a representative numerical
example.

\end{abstract}

\section{Introduction}
\label{sec:introduction}

Game theory is a powerful analytical framework for modeling rational decision-making in complex multi-agent systems, particularly when interactions are strategic and agents’ decisions influence each other~\cite{shamma2020game,basar1999dynamic}. Such interactions often involve conflicting objectives, motivating the development of solution concepts to characterize equilibrium behavior. Among these, the Nash equilibrium is a well-established notion, defined as a set of strategies from which no participant can unilaterally deviate to achieve a better outcome~\cite{nash1950equilibrium}.

Thanks to its solid theoretical foundation, Nash equilibrium has been widely applied in diverse engineering domains, including robotics~\cite{chiu2021encoding}, traffic management~\cite{Oszczypała2023Nash}, and sensor networks~\cite{Zheng2023game}, as well as in emerging areas such as energy management~\cite{YANG2020216nashQ,abdolmohammadi2024sizing}, autonomous vehicle navigation~\cite{nazari2024powertrain, bouzidi2025interaction, abtahi2025multi}, and human–robot interaction~\cite{soltanian2025peerawarecostestimationnonlinear}.

Despite its wide adoption, implementing Nash strategies in real-world engineering systems faces significant challenges due to uncertainties, disturbances, delays, and actuator imperfections~\cite{van-den=broek2023,AMATO2002507_guaranteeing,ma2024ϵ}. Extensions such as the $\varepsilon$-equilibrium relax optimality by allowing bounded deviations from Nash strategies~\cite{TANAKA1991413}, enabling numerical approximations~\cite{NORTMANN20231760,alvarez2009urban} and supporting further studies on sensitivity and robustness~\cite{van-den=broek2023,AMATO2002507_guaranteeing,Jimenez01072006,tan2005sensitivity,hernandez2024ϵ}.

Robustness of Nash equilibrium to uncertainty has been studied from various perspectives. For example, model uncertainties in differential games were addressed in~\cite{van-den=broek2023,AMATO2002507_guaranteeing,hernandez2024ϵ}, where~\cite{van-den=broek2023} designed a finite-horizon feedback Nash strategy for Linear Quadratic (LQ) games, \cite{AMATO2002507_guaranteeing} extended the scope to infinite-horizon settings, and \cite{hernandez2024ϵ} relaxed the LQ structure via weaker convexity assumptions. Other works addressed different uncertainty sources:~\cite{Jimenez01072006} proposed adaptive control to mitigate measurement noise, and~\cite{tan2005sensitivity} analyzed the effect of cost parameter variations in discrete-time games.

Nonetheless, the sensitivity of Nash equilibrium to player-specific control deviations, arising from implementation imperfections, remains largely unexplored. Such deviations, e.g., due to actuator errors or delays, are common in engineering systems and can directly degrade performance. Moreover, the discrete-time setting has received little attention in this context. This work analytically investigates how such deviations influence system trajectories and player costs.

To address these effects, we construct a compensation law for
the influenced player that uses the nominal equilibrium structure
together with measurable deviation information. The resulting
controller is not posed as a new Nash equilibrium of the modified
interaction; rather, it is a structured compensation policy around
the nominal game that mitigates performance loss caused by
execution-time deviations.
While prior adaptive multi-agent methods exist~\cite{guerrero2021openloop,wang2022risk}, they do not explicitly target compensation for execution-time control deviations of a specific player, which is addressed in this work.

This paper investigates finite-horizon, discrete-time LQ
games in which one player’s control inputs deviate during execution.
We analyze the resulting impact on joint trajectories and on the
other player’s cost. The main contributions are summarized below:
\begin{itemize}
    \item A sensitivity analysis for discrete-time games, deriving closed-form expressions that characterize how one player’s deviations propagate through the system dynamics and influence the other player’s cost.
    \item
    A compensation design for the influenced player, obtained
  by augmenting the nominal FNE dynamics with measurable deviation dynamics. The resulting policy is shown to be uniquely optimal over a causal affine policy class that includes the uncompensated reference feedback as a special case.
    \item  A local performance comparison showing that, for sufficiently small deviations, the proposed compensated feedback is no worse than the uncompensated equilibrium-derived feedback to second order, together with a representative numerical example illustrating the resulting performance recovery.
\end{itemize}

The remainder of the paper is organized as follows.
Section~\ref{sec: problem statement} presents the problem formulation and background.
Section~\ref{sec:error-on-nash-equilibria} develops the sensitivity analysis.
Section~\ref{sec: control policy design} introduces the modified feedback control policy.
Section~\ref{sec:numerics} demonstrates its effectiveness through a numerical example.
Finally, Section~\ref{sec: conclusion} concludes the paper.

\noindent \textbf{Notation}. $\mathbb{R}$ and $\mathbb{Z}$ denote the sets of real and integer numbers, respectively. For a matrix $A$, $A^\top$ denotes its transpose; for a square matrix $B$, $B \succ 0$ ($B \succeq 0$) means $B$ is positive definite (positive semi-definite), and $B^{-1}$ denotes its inverse. For $i \in N_p=\{1,2\}$, the subscript $_{-i}$ refers to $N_{p} \setminus \{i\}$, i.e., all indices except $i$. The superscript $^\star$ denotes the exact optimal solution. A tilde, as in $\tilde P_{i,k}$, denotes auxiliary Riccati-type matrices used to distinguish the coupled finite-horizon game recursion from standard Riccati matrices.

\section{Preliminaries}
\label{sec: problem statement}
We focus on finite-horizon linear quadratic (LQ) games~\cite{Engwerda2009lq} with discrete-time dynamics
\begin{equation}
    \label{eq:DT game dynamics}
    x_{k+1} = A x_k + B_1 u_{1,k} + B_2 u_{2,k},
\end{equation}
for $k \in \mathbb{K} = \{0,1,\dots, N-1\} \subset \mathbb{Z}$, where:
\begin{itemize}
    \item \(x_k = \begin{bmatrix} x_{1,k}^\top & x_{2,k}^\top \end{bmatrix}^\top
\in \mathbb{R}^{n_1+n_2}\) is the joint state vector obtained by concatenating the state vectors $x_{i,k} \in \mathbb{R}^{n_i}$ of Player $i \in \{1,2\}$, where $n_i$ is the dimension of Player~$i$'s state.
    \item \(u_{1,k} \in \mathbb{R}^{m_1}\) and \(u_{2,k} \in \mathbb{R}^{m_2}\) are the control inputs of Player~1 and~2, respectively, where $m_i$ is the dimension of Player~$i$'s control input.
    \item \(A \in \mathbb{R}^{(n_1+n_2)\times (n_1+n_2)}\) is the joint state transition matrix.
    \item \(B_1 \in \mathbb{R}^{(n_1+n_2)\times m_1}\) and \(B_2 \in \mathbb{R}^{(n_1+n_2)\times m_2}\) are the input matrices for Player~1 and~2, respectively.
\end{itemize}

Suppose Player~$i \in \{1,2\}$ minimizes its cost $J_i$ over an $N$-step horizon by choosing control inputs $u_{i,k}$ at each stage $k \in \mathbb{K}$. The strategy of Player~$i$ is $U_i = \{u_{i,k}\}_{k=0}^{N-1}$, and $U_{-i}$ denotes its counterpart’s strategy. The quadratic cost function of Player~$i$ is
\begin{equation}
    \label{eq:LQ cost function}
    J_i = \sum_{k=0}^{N-1} \left( x_k^\top Q_i x_k + u_{i,k}^\top R_i u_{i,k} \right) + x_N^\top Q_{i,N} x_N,
\end{equation}
where $Q_i, Q_{i,N} \succeq 0$ are the stage and terminal weighting matrices, and $R_i \succ 0$ is the control weighting matrix.

We use Nash equilibrium to mean a strategy profile $U_i$, where no player can reduce its cost by a unilateral deviation, i.e. $ J_i(x_0, U_i^\star, U_{-i}^\star) \le J_i(x_0, U_i, U_{-i}^\star)$~\cite{nash1950equilibrium}.
\begin{definition}
\label{def: definition of Nash}
Consider the LQ game \eqref{eq:DT game dynamics}–\eqref{eq:LQ cost function}.  
A pair of feedback laws
\[
u_{i,k}=-K^\star_{i,k}x_k,\qquad i\in\{1,2\},\ k\in\mathbb{K},
\]
constitutes a linear Feedback Nash Equilibrium (FNE) if, for each $i$, $u_{i,k}$ minimizes $J_i$ against the opponent’s feedback $u_{-i,k}=-K^\star_{-i,k}x_k$ over the horizon.
\end{definition}

The FNE gains $\{K^\star_{i,k}\}$ are obtained by a backward Riccati recursion coupled across players \cite{salizzoni2025bridging}.  
Let $\tilde P_{i,N}=Q_{i,N}$ and define, for each $k\in\{0,\dots,N-1\}$ (computed backward),
\[
A^{-i}_{\mathrm{cl},k} \;:=\; A-\sum_{j\neq i} B_j K_{j,k}.
\]
Given $\{\tilde P_{i,k+1}\}$ and $\{K_{-i,k}\}$, the matrices $\tilde P_{i,k}$ satisfy
\begin{equation}
\label{eq:coupled_riccati-nominal}
\begin{aligned}
\tilde P_{i,k}
= Q_i &+ (A^{-i}_{\mathrm{cl},k})^\top \tilde P_{i,k+1} A^{-i}_{\mathrm{cl},k}\\
&- (A^{-i}_{\mathrm{cl},k})^\top \tilde P_{i,k+1} B_i
\,\tilde H_k\,^{-1}\!
B_i^\top \tilde P_{i,k+1} A^{-i}_{\mathrm{cl},k},
\end{aligned}
\end{equation}
where $\tilde H_k :=R_i + B_i^\top \tilde P_{i,k+1} B_i$. The corresponding FNE feedback gains are
\begin{equation}
\label{eq:Ki_formula-nominal}
K^\star_{i,k}
= \tilde H_k^{-1} B_i^\top \tilde P_{i,k+1} A^{-i}_{\mathrm{cl},k}.
\end{equation}

Under the standard assumptions $Q_i,Q_{i,N}\succeq 0$, $R_i\succ 0$ and nonsingularity of the gain system  in Definition~\ref{def: definition of Nash} at each $k$, the backward sweep $k=N-1\to 0$ yields a unique FNE \cite[Sec.~6]{Engwerda2009lq}.


We next recall standard sufficient conditions that ensure the existence and uniqueness of the Nash equilibrium in Definition~\ref{def: definition of Nash}, thereby guaranteeing well-posedness of the game.

\begin{remark}
Consider the following sufficient conditions for the existence of a unique Nash equilibrium in a two-player finite-horizon nonzero-sum discrete-time LQ game \eqref{eq:LQ cost function}:
\begin{enumerate}
    \item The state weighting matrices satisfy $Q_{i}, Q_{i,N} \succeq 0$ and the control weighting matrices satisfy $R_{i} \succ 0$ for $i \in \{1,2\}$.\label{condition 1}
    \item Let $\tilde{B} = \begin{bmatrix} B_1 & B_2 \end{bmatrix}$. The pair $(A, \tilde{B})$ is controllable.\label{condition 2}
    \item Define
    \[
    M_{k} =
    \begin{bmatrix}
    I & F_1 & F_2 \\
    P_{1,k} & -I & 0 \\
    P_{2,k} & 0 & -I
    \end{bmatrix},
    \quad F_i = B_i R_{i}^{-1} B_i^\top,
    \]
    where $I$ is the $n\times n$ identity and $P_{i,k}$ denote the player Riccati matrices from the standard backward recursions; invertibility of $M_k$ ensures well-posed coupled updates.\label{condition 3}
\end{enumerate}
Under conditions~\ref{condition 1}-\ref{condition 3}, a unique finite-horizon Nash equilibrium exists. We assume these conditions henceforth \cite{LAA_Jungers_2013}.
\end{remark}

\section{First-Order Sensitivity of Player~1 Cost under Opponent Deviations}
\label{sec:error-on-nash-equilibria}
Under the memoryless state-feedback information structure, a finite-horizon
FNE for a two-player game has the form
\begin{equation}
\label{eq:feedback nash definition}
u^\star_{i,k} = -K^\star_{i,k} x_k, \qquad i\in\{1,2\},\ \ k=0,\dots,N-1,
\end{equation}
as stated in Definition~\ref{def: definition of Nash}.
Substituting \eqref{eq:feedback nash definition} into the dynamics \eqref{eq:DT game dynamics} yields the
time-varying closed-loop matrix
\begin{equation}
\label{eq:Acl-def}
A_{\mathrm{cl},k} := A - B_1 K^\star_{1,k} - B_2 K^\star_{2,k},
\end{equation}
which evolves the dynamics
\begin{equation}
\label{eq:closed-loop Nash dynamic}
x_{k+1}^\star = A_{\mathrm{cl},k} x_k^\star,\qquad x_0^\star \equiv x_0.
\end{equation}

In practice, Player~2’s implemented input may deviate from its FNE command,
$u_{2,k} = u^\star_{2,k} + \Delta u_{2,k}$. In this section, we quantify the
effect of small deviations $\{\Delta u_{2,k}\}$ on Player~1’s cost when
Player~1 keeps its FNE law \eqref{eq:feedback nash definition}. We obtain
(i) an exact expression for the induced state perturbation and
(ii) a first-order expansion of $\Delta J_1$ with explicit coefficients that
respect the time variation in \eqref{eq:Acl-def}.

\begin{definition}
\label{def:phi}
For integers $l\ge k\ge j$, define
\begin{equation*}
\Phi(k,j) := \prod_{i=j}^{k-1} A_{\text{cl},i} \;=\; A_{\text{cl},k-1} \cdots A_{\text{cl},j},
\end{equation*}
with the convention $\Phi(j,j):=I$. 
\end{definition}
The state-transition matrix satisfies the standard identities
\[
\Phi(k+1,j)=A_{\text{cl},k}\,\Phi(k,j), \qquad\Phi(k,\ell)\,\Phi(\ell,j)=\Phi(k,j).
\]

\begin{proposition}\label{prop:first_order}
Let $\{K^\star_{1,k}, K^\star_{2,k}\}_{k=0}^{N-1}$ be the FNE gains in \eqref{eq:feedback nash definition}. Suppose Player~2 deviates as $u_{2,k}=u^\star_{2,k}+\Delta u_{2,k}$ while Player~1 keeps $u_{1,k}=u^\star_{1,k}=-K^\star_{1,k}x_k$. Then, for $k=1,\dots,N$,
\begin{equation}\label{eq:dx_formula}
\Delta x_k \;=\; \sum_{j=0}^{k-1} \Phi(k,j+1)\,B_2\,\Delta u_{2,j},
\end{equation}
and, with $S_k := Q_1 + K_{1,k}^{\star\top} R_1 K^\star_{1,k}$,
\begin{equation}\label{eq:DJ_linear}
\Delta J_1 \;=\; 2 \sum_{j=0}^{N-1} x_0^\top \Lambda_j \,B_2\,\Delta u_{2,j} \;+\; \mathcal{O}\!\big(\|\Delta u_2\|_2^2\big),
\end{equation}
where
\begin{equation}\label{eq:Lambda_j}
\begin{aligned}
    \Lambda_j \;=\; \sum_{k=j+1}^{N-1} \Phi(k,0)^\top &S_k \,\Phi(k,j+1) \\
    \;&+\; \Phi(N,0)^\top Q_{1,N}\,\Phi(N,j+1),
\end{aligned}
\end{equation}
for $j=0,\dots,N-1$.
Here $\Phi(k,j)$ is as in Definition~\ref{def:phi}, $\Delta u_2:=[\Delta u_{2,0}^\top\,\cdots\,\Delta u_{2,N-1}^\top]^\top$, and
the remainder is with respect to the Euclidean norm $\|\Delta u_2\|_2$.
\end{proposition}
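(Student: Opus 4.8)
The plan is to match the two assertions of the proposition and prove them in sequence: first the exact state-perturbation formula \eqref{eq:dx_formula}, then the first-order cost expansion \eqref{eq:DJ_linear}--\eqref{eq:Lambda_j}. For the state part, substitute $u_{1,k}=-K^\star_{1,k}x_k$ and $u_{2,k}=-K^\star_{2,k}x_k+\Delta u_{2,k}$ into the dynamics \eqref{eq:DT game dynamics}; invoking \eqref{eq:Acl-def} this collapses to the affine recursion $x_{k+1}=A_{\mathrm{cl},k}x_k+B_2\,\Delta u_{2,k}$ with $x_0$ fixed. Subtracting the nominal closed-loop recursion \eqref{eq:closed-loop Nash dynamic} gives $\Delta x_{k+1}=A_{\mathrm{cl},k}\,\Delta x_k+B_2\,\Delta u_{2,k}$ with $\Delta x_0=0$, and a discrete variation-of-constants argument---an easy induction on $k$ using $\Phi(k+1,j)=A_{\mathrm{cl},k}\Phi(k,j)$ from Definition~\ref{def:phi}---yields \eqref{eq:dx_formula}. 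I would also flag here that, because both the dynamics and Player~1's feedback are linear, $\Delta x_k$ is \emph{exactly} linear in $\Delta u_2$, which is what makes the later remainder purely quadratic.

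For the cost, the key observation is that Player~1 keeps the linear gain $K^\star_{1,k}$ applied to the \emph{realized} state, so $u_{1,k}^\top R_1 u_{1,k}=x_k^\top K_{1,k}^{\star\top} R_1 K^\star_{1,k} x_k$ and hence $J_1=\sum_{k=0}^{N-1} x_k^\top S_k x_k + x_N^\top Q_{1,N} x_N$ with $S_k$ as defined. Writing $x_k=x_k^\star+\Delta x_k$ and expanding each quadratic form, the difference $\Delta J_1=J_1-J_1^\star$ splits into the cross terms $2\sum_{k=0}^{N-1}(x_k^\star)^\top S_k\Delta x_k+2(x_N^\star)^\top Q_{1,N}\Delta x_N$ plus the terms quadratic in the $\Delta x_k$; by the previous step and finiteness of the horizon the latter are $\mathcal{O}(\|\Delta u_2\|_2^2)$. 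Using $x_k^\star=\Phi(k,0)x_0$ and substituting \eqref{eq:dx_formula} turns the cross terms into a double sum over $k$ and $j$; the $k=0$ term vanishes since $\Delta x_0=0$.

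The remaining work is the index bookkeeping: interchange the order of summation via $\sum_{k=1}^{N-1}\sum_{j=0}^{k-1}=\sum_{j=0}^{N-2}\sum_{k=j+1}^{N-1}$, add the terminal contribution (which lives for all $j=0,\dots,N-1$), and collect the coefficient of $B_2\,\Delta u_{2,j}$. For $j\le N-2$ this gives both pieces of $\Lambda_j$, and for $j=N-1$ the running-cost sum $\sum_{k=N}^{N-1}$ is empty, so only the $Q_{1,N}$ term survives---exactly reproducing \eqref{eq:Lambda_j}.

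The proof has no genuinely hard step; the only places that need care are (a) getting the summation limits and the empty-range conventions in $\Lambda_j$ precisely right after the interchange, and (b) justifying that the discarded terms are $\mathcal{O}(\|\Delta u_2\|_2^2)$ \emph{uniformly}, i.e. with a constant depending only on the fixed data $\{A_{\mathrm{cl},k}\}$, $B_2$, $\{S_k\}$, $Q_{1,N}$, $x_0$ and not on $\Delta u_2$ -- which is immediate since $N$ is finite and each $\|\Phi(k,j)\|$ is bounded by a product of the (fixed) norms $\|A_{\mathrm{cl},i}\|$. In fact one could state an exact identity $\Delta J_1 = 2\sum_j x_0^\top \Lambda_j B_2\,\Delta u_{2,j} + \Delta u_2^\top \mathcal{Q}\,\Delta u_2$ with an explicit $\mathcal{Q}$; the form in \eqref{eq:DJ_linear} is the weaker statement obtained by absorbing that quadratic form into the remainder.
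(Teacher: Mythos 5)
Your proposal is correct and follows essentially the same route as the paper's proof: derive the perturbation recursion $\Delta x_{k+1}=A_{\mathrm{cl},k}\Delta x_k+B_2\Delta u_{2,k}$, iterate it using the semigroup property of $\Phi$, then expand the cost about the nominal trajectory with $\Delta u_{1,k}=-K^\star_{1,k}\Delta x_k$ and collect coefficients into $\Lambda_j$. Your added observations (the exact quadratic identity for the remainder and the careful handling of the empty sum at $j=N-1$) are consistent refinements, not a different method.
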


\begin{proof}
Under the FNE
, the nominal trajectory is obtained via \eqref{eq:closed-loop Nash dynamic}. With $u_{2,k}=u_{2,k}^\star+\Delta u_{2,k}$ and $u_{1,k}=u_{1,k}^\star$, the actual state satisfies
\begin{equation*}
x_{k+1} \;=\; A_{\text{cl},k} \, x_k + B_2 \, \Delta u_{2,k}.
\end{equation*}
Defining $\Delta x_k := x_k - x_k^\star$ gives
\begin{equation*}\label{eq:dx_recursion}
\Delta x_{k+1} \;=\; A_{\text{cl},k} \, \Delta x_k + B_2 \, \Delta u_{2,k}, \qquad \Delta x_0 = 0,
\end{equation*}
whose iteration yields \eqref{eq:dx_formula} using $\Phi(k+1,j)=A_{\text{cl},k}\,\Phi(k,j)$.

Next, linearizing \eqref{eq:LQ cost function} for $i=1$ around the FNE trajectory $(x_k^\star,u_{1,k}^\star)$, leads to
\begin{equation*}\label{eq:DJ_raw}
\begin{aligned}
    \Delta J_1 \;=\; 2 \sum_{k=0}^{N-1} \big( x_k^{\star\top}& Q_1 \Delta x_k + u_{1,k}^{\star\top} R_1 \Delta u_{1,k} \big)\\
    \;+\; &2 \, x_N^{\star\top} Q_{1,N} \Delta x_N \;+\; \mathcal{O}(\|\Delta x\|_2^2),
\end{aligned}
\end{equation*}
where $\Delta\,J_1 = J_1(x_k,u_{1,k})-J_1(x_k^\star,u_{1,k}^\star)$.

Since $u_{1,k}=-K^\star_{1,k}x_k$, we have $\Delta u_{1,k} = -K^\star_{1,k}\Delta x_k$, hence
\begin{equation}\label{eq:DJ_Sk}
\begin{aligned}
    \Delta J_1 \;=\; 2 \sum_{k=0}^{N-1} x_k^{\star\top} S_k \, \Delta x_k
\;+\; 2 x_N^{\star\top}& Q_{1,N} \Delta x_N \\
\;&+\; \mathcal{O}(\|\Delta x\|_2^2),
\end{aligned}
\end{equation}
where $S_k := Q_1 + K_{1,k}^{\star\top} R_1 K^\star_{1,k}$. Substituting \eqref{eq:dx_formula} into \eqref{eq:DJ_Sk}, defining $\Lambda_j$ as in \eqref{eq:Lambda_j}, and using $x_k^\star=\Phi(k,0)x_0$ gives \eqref{eq:Lambda_j}. As $\Delta x$ is linear in $\Delta u_2$ by \eqref{eq:dx_formula}, the remainder is $\mathcal{O}(\|\Delta u_2\|_2^2)$. This completes the proof.
\end{proof}

Proposition~\ref{prop:first_order} shows the deviations of Player~2 linearly perturb Player~1’s cost. If Player~1 keeps the equilibrium-derived feedback without compensation, e.g. $u_{1,k}^{\mathrm{REF}}:=-K^\star_{1,k}x_k$ (referred to as \textit{Reference Feedback (REF)}), its policy is generally sub-optimal under the realized dynamics. Motivated by this sensitivity, we next design a controller that models the deviation and attenuates its effect as it propagates through the closed loop, compensating for these deviations.

\section{Control Policy Design with Disturbance Compensation}
\label{sec: control policy design}

In practice, the control implementation is rarely exact. Differences between the commanded and implemented inputs typically arise from (i) actuator/servo dynamics and inner-loop tracking (first-order lags, rate/slew limits), (ii) computation/communication latency and sample-and-hold effects, and (iii) command filtering/smoothing or safety layers that reshape inputs.
These effects induce an input disturbance $w_k \;:=\; B_2\big(u_{2,k}-u_{2,k}^\star\big)$, acting precisely through the channel identified by the sensitivity result.

In practice, $w_k$ need not be directly measured at the actuator. Instead, Player~1 may form an estimate $\hat w_k$ from available information. If the commanded input $u^\star_{2,k}=-K^\star_{2,k}x_k$ is known from the nominal FNE model and the implemented input $u_{2,k}$ can be communicated, sensed, or reconstructed from actuator or vehicle telemetry, then one may compute $\hat w_k = B_2\bigl(u_{2,k}-u^\star_{2,k}\bigr)$.
When $u_{2,k}$ is not directly available, $\hat w_k$ can instead be obtained using a standard disturbance observer or state estimator built from the plant model and measured state trajectory. In that case, the proposed CF law is applied using $\hat w_k$ in place of $w_k$, while the present analysis corresponds to the ideal full-information case.

We therefore model $w_k$ using simple, identifiable dynamics and equip Player~1 with a \textit{Compensated Feedback (CF)} law containing a feedforward term $L_k w_k$ to offset the predictable component of the disturbance. In this way, the feedback term $K_k$ regulates the residual error rather than reacting only after the state has already deviated. We first consider a physically grounded actuator-lag model and then narrow it down to a slow-varying error model.

\subsection{Compensated Feedback under Lagged Execution Error Dynamics}
\label{sec:lag-best-response}

Assume that Player~2’s actuator is a discrete-time first-order lag that tracks its desired (Nash) input $u^{\star}_{2,k}=-K^{\star}_{2,k}x_k$ and let $u_{2,k}$ denote the implemented command. Let the execution error be $e_k:=u_{2,k}-u^{\star}_{2,k}$. Then, the plant disturbance is $w_k=B_2 e_k$.

\begin{assumption}
\label{ass:info-lag}
At time $k$, Player~1 observes $(x_k,w_k)$ and Player~2 implements its FNE command $u^{\star}_{2,k}=-K^{\star}_{2,k}x_k$ through a discrete first-order lag:
\begin{equation}\label{eq:u2-lag}
u_{2,k+1} \;=\; \alpha\,u_{2,k} \;+\; (1-\alpha)\,u^{\star}_{2,k}, \qquad \alpha\in(0,1).
\end{equation}
The induced disturbance recursion is
\begin{equation}\label{eq:w-lag}
\begin{aligned}
w_{k+1} &\;=\; \alpha\,w_k \;-\; B_2\big(u^{\star}_{2,k+1}-u^{\star}_{2,k}\big)\\
&\;=\; \alpha\,w_k \;+\; B_2\!\big(K^{\star}_{2,k+1}x_{k+1}-K^{\star}_{2,k}x_k\big).
\end{aligned}
\end{equation}
It is noteworthy that at each stage $k$, Player 1 knows the current $x_k$ and the current $w_k$, but not future values.
\end{assumption}

\begin{remark}
\label{rem:alpha-tau}
The discrete recursion \eqref{eq:u2-lag} is the zero-order-hold (ZOH) sampled counterpart of the continuous-time actuator $\dot u_2(t)=\tfrac{1}{\tau}(u_2^{\star}(t)-u_2(t))$ with sampling period $\Delta t$. The pole maps as
\begin{equation}\label{eq:alpha-tau}
\alpha \;=\; e^{-\Delta t/\tau},
\end{equation}
so the time constant $\tau$ of the continuous model sets the per-sample settling factor $\alpha\in(0,1)$.
\end{remark}

The lag model stated in Assumption~\ref{ass:info-lag}  captures two ubiquitous effects: (i) if the target $u^{\star}_{2,k}$ is steady, the tracking error decays geometrically with factor $\alpha$; (ii) changes in $u^{\star}_{2,k}$ inject a transient error that then dies out. Such behavior is widely used as a realistic low-order approximation of implementation effects.

Fix Player~2 at its Nash feedback $u^{\star}_{2,k}=-K^{\star}_{2,k}x_k$ and define
\begin{equation}\label{eq:Aeff-def}
A_{\mathrm{eff},k}:=A-B_2K^{\star}_{2,k},\qquad
x_{k+1}=A_{\mathrm{eff},k}x_k+B_1u_{1,k}+w_k .
\end{equation}

\begin{theorem}
\label{thm:lag-optimal}
Consider the game \eqref{eq:LQ cost function} with dynamics \eqref{eq:DT game dynamics}.
Let Player~2 obey its FNE policy \eqref{eq:feedback nash definition} and define the effective matrices
$A_{\mathrm{eff},k}$ via \eqref{eq:Aeff-def}. Under Assumption~\ref{ass:info-lag}, let the disturbance $w_k$ be generated by the lag recursion
\eqref{eq:w-lag}. Assume $Q_1,Q_{1,N}\succeq 0$ and $R_1\succ 0$. Over the causal affine class
\begin{equation}\label{eq:policy-class}
u_{1,k} \;=\; -K_k x_k \;-\; L_k w_k , \qquad k=0,\dots,N-1,
\end{equation}
there exists a \emph{unique} optimal policy that minimizes \eqref{eq:LQ cost function}, with gains
\begin{equation}\label{eq:gains-closed}
\big[\,K_k\ \ L_k\,\big]
\;=\;
\big(R_1+\bar B_k^\top \bar P_{k+1}\bar B_k\big)^{-1}\,\bar B_k^\top \bar P_{k+1}\bar A_k,
\end{equation}
where $\{\bar P_k\}$ is the unique sequence generated by the finite-horizon Riccati recursion
\begin{equation}\label{eq:riccati-aug}
\begin{aligned}
&\bar P_N=\bar Q_N,\\
&\begin{aligned}
\bar P_k&=\bar Q+\bar A_k^\top \bar P_{k+1}\bar A_k\\
&-\bar A_k^\top \bar P_{k+1}\bar B_k\big(R_1+\bar B_k^\top \bar P_{k+1}\bar B_k\big)^{-1}\bar B_k^\top \bar P_{k+1}\bar A_k,
\end{aligned}\\
&\text{for } k=N-1,\dots,0,
\end{aligned}
\end{equation}
on the augmented system with matrices
\begin{equation}\label{eq:ABQ-def}
\begin{aligned}
&\bar A_k:=\begin{bmatrix}
A_{\mathrm{eff},k} & I\\[2pt]
B_2K^{\star}_{2,k+1}A_{\mathrm{eff},k}-B_2K^{\star}_{2,k} & \alpha I + B_2K^{\star}_{2,k+1}
\end{bmatrix},\\
&\bar B_k:=\begin{bmatrix}B_1\\[2pt] B_2K^{\star}_{2,k+1}B_1\end{bmatrix},
\end{aligned}
\end{equation}
where
\begin{equation}\label{eq:Q-aug}
    \bar{Q} = \begin{bmatrix} Q_1 & 0 \\ 0 & 0 \end{bmatrix}, \quad \bar{Q}_N = \begin{bmatrix} Q_{1,N} & 0 \\ 0 & 0 \end{bmatrix}.
\end{equation}
Moreover, for any causal $\{\tilde K_k,\tilde L_k\}$, the incurred cost satisfies the nonnegative gap identity
\begin{equation}\label{eq:gap}
\begin{aligned}
\sum_{k=0}^{N-1}\!\big(\tilde u_{1,k}-u_{1,k}^\star\big)&^\top
\big(R_1+\bar B_k^\top \bar P_{k+1}\bar B_k\big)
\big(\tilde u_{1,k}-u_{1,k}^\star\big)\\
&\,=\,J_1(\{\tilde K_k,\tilde L_k\})-J_1(\{K_k,L_k\})\ \ge 0,
\end{aligned}
\end{equation}
where $u_{1,k}^\star:=-K_k x_k - L_k w_k$ and $\tilde u_{1,k}:=-\tilde K_k x_k - \tilde L_k w_k$.
\end{theorem}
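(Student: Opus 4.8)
The plan is to recognize this as a standard time-varying finite-horizon LQR problem on the augmented state $z_k:=[\,x_k^\top\ w_k^\top\,]^\top$ — which is exactly the information Player~1 has at stage $k$ under Assumption~\ref{ass:info-lag} — and then read off \eqref{eq:gains-closed}--\eqref{eq:riccati-aug} from the classical Riccati solution and \eqref{eq:gap} from the completion-of-squares identity. So the proof has three movements: build the augmented system, solve it by dynamic programming (getting existence/uniqueness of both $\{\bar P_k\}$ and the gains), and telescope to get the gap identity.

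First I would derive the augmented model. Substituting $x_{k+1}=A_{\mathrm{eff},k}x_k+B_1u_{1,k}+w_k$ from \eqref{eq:Aeff-def} into the disturbance recursion \eqref{eq:w-lag}, $w_{k+1}=\alpha w_k+B_2(K^\star_{2,k+1}x_{k+1}-K^\star_{2,k}x_k)$, and collecting the coefficients of $x_k$, $w_k$, $u_{1,k}$ reproduces exactly $z_{k+1}=\bar A_k z_k+\bar B_k u_{1,k}$ with $\bar A_k,\bar B_k$ as in \eqref{eq:ABQ-def}. Since $x_k^\top Q_1 x_k=z_k^\top\bar Q z_k$ and $x_N^\top Q_{1,N}x_N=z_N^\top\bar Q_N z_N$ with $\bar Q,\bar Q_N$ as in \eqref{eq:Q-aug}, the cost \eqref{eq:LQ cost function} for $i=1$ becomes $J_1=\sum_{k=0}^{N-1}(z_k^\top\bar Q z_k+u_{1,k}^\top R_1 u_{1,k})+z_N^\top\bar Q_N z_N$, and the policy class \eqref{eq:policy-class} is precisely linear full-state feedback $u_{1,k}=-[\,K_k\ L_k\,]z_k$ on $z_k$. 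I would also note that $z_0=[\,x_0^\top\ w_0^\top\,]^\top$ is a fixed initial condition, independent of Player~1's choices (since $w_0=B_2(u_{2,0}-u^\star_{2,0})$ is determined by the given initial actuator state and $x_0$).

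Next I would solve the LQR problem by backward dynamic programming with value function $V_k(z)=z^\top\bar P_k z$ and $V_N(z)=z^\top\bar Q_N z$. A backward induction gives $\bar P_k\succeq0$: at stage $k$, with $\bar P_{k+1}\succeq0$ already in hand, the factor $\bar R_k:=R_1+\bar B_k^\top\bar P_{k+1}\bar B_k\succeq R_1\succ0$ is invertible without circularity, so the Riccati step \eqref{eq:riccati-aug} is well defined; rewriting it as $\bar P_k=\bar Q+\bar A_k^\top\big(\bar P_{k+1}-\bar P_{k+1}\bar B_k\bar R_k^{-1}\bar B_k^\top\bar P_{k+1}\big)\bar A_k$ exhibits it as a sum of positive semidefinite terms, closing the induction. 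Hence $\{\bar P_k\}$ is the unique solution of \eqref{eq:riccati-aug}, and at each stage the Bellman objective $z^\top\bar Q z+u^\top R_1 u+(\bar A_k z+\bar B_k u)^\top\bar P_{k+1}(\bar A_k z+\bar B_k u)$ is strictly convex in $u$ (Hessian $2\bar R_k\succ0$), with unique minimizer $u_{1,k}=-\bar R_k^{-1}\bar B_k^\top\bar P_{k+1}\bar A_k z_k$; partitioning this gain into its $x$- and $w$-blocks yields \eqref{eq:gains-closed}, and strict convexity at each stage gives uniqueness within \eqref{eq:policy-class}.

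Finally, for the gap identity I would run the completion-of-squares/telescoping argument along the trajectory of an arbitrary causal $\{\tilde K_k,\tilde L_k\}$ (the computation is unchanged for any causal $\tilde u_{1,k}$, since $z_{k+1}=\bar A_k z_k+\bar B_k\tilde u_{1,k}$ holds regardless). Using the Riccati relation $\bar Q+\bar A_k^\top\bar P_{k+1}\bar A_k-\bar P_k=\bar A_k^\top\bar P_{k+1}\bar B_k\bar R_k^{-1}\bar B_k^\top\bar P_{k+1}\bar A_k$ one gets the per-stage identity $z_k^\top\bar Q z_k+\tilde u_{1,k}^\top R_1\tilde u_{1,k}+V_{k+1}(z_{k+1})-V_k(z_k)=(\tilde u_{1,k}-u_{1,k}^\star)^\top\bar R_k(\tilde u_{1,k}-u_{1,k}^\star)$, with $u_{1,k}^\star:=-K_k x_k-L_k w_k$; summing over $k=0,\dots,N-1$ telescopes the $V$-terms to $z_N^\top\bar Q_N z_N-z_0^\top\bar P_0 z_0$, so $J_1(\{\tilde K_k,\tilde L_k\})-z_0^\top\bar P_0 z_0=\sum_k(\tilde u_{1,k}-u_{1,k}^\star)^\top\bar R_k(\tilde u_{1,k}-u_{1,k}^\star)$, while the same computation for $\{K_k,L_k\}$ gives $J_1(\{K_k,L_k\})=z_0^\top\bar P_0 z_0$ (each term vanishing); subtracting yields \eqref{eq:gap}, and nonnegativity follows from $\bar R_k\succ0$. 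The only genuine work here is the bookkeeping in the first movement — verifying that substituting $x_{k+1}$ into \eqref{eq:w-lag} reproduces the blocks of $\bar A_k,\bar B_k$ in \eqref{eq:ABQ-def} exactly — together with the routine PSD induction that licenses the inverses; once the augmented system is assembled, everything else is the textbook finite-horizon LQR argument.
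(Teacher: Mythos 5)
Your proposal is correct and follows essentially the same route as the paper: augment the state to $z_k=[x_k^\top\ w_k^\top]^\top$, recast the problem as a finite-horizon LQR on $(\bar A_k,\bar B_k,\bar Q,\bar Q_N)$, solve by backward dynamic programming with a quadratic value function, and obtain the gap identity by completing the square and telescoping. Your version is in fact slightly more careful than the paper's on two points the paper leaves implicit — the explicit block-by-block verification that substituting $x_{k+1}$ into \eqref{eq:w-lag} reproduces \eqref{eq:ABQ-def}, and the inductive argument that $\bar P_{k+1}\succeq 0$ so that $\bar R_k\succ 0$ is licensed at every stage — so no changes are needed.
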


\begin{proof}
Inspired by the Disturbance-Based State Compensation (DBSC) approach \cite{rugh1996linear}, which is commonly used in traditional state-feedback control design, and given Assumption~\ref{ass:info-lag}, we define the augmented state as $z_k:=[x_k^\top\ w_k^\top]^\top$.

Then the game problem can be rewritten in augmented form
\begin{align}
&\min_{ \{ u_{1,k} \}_{k=0}^{N-1} } z_N^\top \bar Q_N z_N + \sum_{k=0}^{N-1}\!\big(z_k^\top \bar Q z_k + u_{1,k}^\top R_1 u_{1,k}\big), \label{eq:aug-cost}\\
& \text{subject to:} \quad z_{k+1}=\bar A_k z_k + \bar B_k u_{1,k}, \quad k =0,\dots, N-1,\label{eq:augAB}
\end{align}
with $\bar A_k,\bar B_k,\bar Q,\bar Q_N$ given in \eqref{eq:ABQ-def}.

Define the value function of the augmented system
\begin{equation}\label{eq:Vk}
V_k(z):=\min_{\{u_{1,\ell}\}_{\ell=k}^{N-1}}
\Big\{z_N^\top \bar Q_N z_N+\sum_{\ell=k}^{N-1}\!(z_\ell^\top \bar Q z_\ell + u_{1,\ell}^\top R_1 u_{1,\ell})\Big\}
\end{equation}
subject to the augmented dynamic constraint of \eqref{eq:augAB}.

Based on the Bellman principle of optimality at time $k$, we can reformulate $V_k(z)$ as
\begin{equation}
\label{eq:V_k vs V_k+1}
    V_k(z_k)=\min_{u_{1,k}}\Big\{ z_k^\top \bar{Q} z_k + u_{1,k}^\top R_1 u_{1,k} + V_{k+1}\big(z_{k+1}\big) \Big\},
\end{equation}

Backward induction yields $V_k(z_k)=z_k^\top \bar P_k z_k + \bar c_k$ with $\bar P_k$ given by \eqref{eq:riccati-aug}.
Substituting the augmented dynamics \eqref{eq:augAB} in \eqref{eq:V_k vs V_k+1} and collecting terms in $u_{1,k}$ yields the strictly convex quadratic in $u_{1,k}$:
\begin{equation}
\begin{aligned}
\mathcal Q_k(z_k,u_{1,k})
= u_{1,k}^\top &\mathcal H_k\, u_{1,k}
+ 2\,u_{1,k}^\top \bar B_k^\top \bar P_{k+1}\bar A_k\, z_k\\
&+ z_k^\top\!\big(\bar Q+\bar A_k^\top \bar P_{k+1}\bar A_k\big) z_k
+ \bar c_{k+1},
\end{aligned}
\end{equation}
where $\mathcal H_k:=R_1+\bar B_k^\top \bar P_{k+1}\bar B_k$ which is positive definite as $R_1\succ 0$ and $\bar P_{k+1}\succeq 0$.

Hence, the unique minimizer is
\begin{equation}\label{eq:ukstar}
u_{1,k} \;=\; -\mathcal H_k^{-1}\bar B_k^\top \bar P_{k+1}\bar A_k\, z_k =:-\big[\,K_k\ \ L_k\,\big] z_k
\end{equation}
which, upon partitioning $z_k=[x_k^\top \; w_k^\top]^\top$, gives \eqref{eq:gains-closed}. 

Completing the square yields \eqref{eq:riccati-aug}. Summing the stagewise completion terms from $k=0$ to $N-1$ telescopes the value functions and proves \eqref{eq:gap}. Uniqueness follows the strict convexity ($\mathcal H_k\succ 0$).
\end{proof}



By \eqref{eq:gap} in Theorem~\ref{thm:lag-optimal}, the gains $\{K_k, L_k\}$ are optimal over the causal affine class \eqref{eq:policy-class}, which includes the REF baseline $u_{1,k}^{\text{REF}}$ as the special case $L_k\equiv0$. 
Consequently, when execution-time deviations by Player 2 are measurable, Player~1 can improve upon the uncompensated reference feedback by adopting the compensated law within this policy class. This does not redefine the Nash equilibrium of the underlying game: the FNE is associated with the nominal game without implementation deviations, whereas the proposed law is a compensation strategy for the realized disturbed interaction.
The approach is practically relevant because such deviations routinely arise from disturbances, physical limits, and execution imperfections.

The next result quantifies the local benefit, giving a signed, second-order improvement in the deviation magnitude.

\begin{corollary}
\label{cor:local-gain-lag}
Let $J_1^{\mathrm{CF}}$ and $J_1^{\mathrm{REF}}$ be the costs under
$u_{1,k}^{\mathrm{CF}}=-K_k x_k - L_k w_k$ and
$u_{1,k}^{\mathrm{REF}}=-K^{\star}_{1,k} x_k$, respectively, where
$\{\bar P_k\}$ and $[\,K_k\ \ L_k\,]$ are as in Theorem~\ref{thm:lag-optimal}.
Define the selector $E_w:=[\,0\ \ I\,]$ so that $w_k=E_w z_k$ for
$z_k:=[x_k^\top\ w_k^\top]^\top$. For sufficiently small deviations
$\{\Delta u_{2,k}\}$ where $w_k=\mathcal{O}(\|\Delta u_2\|_2)$,
\begin{equation}\label{eq:local-gain-lag}
\begin{aligned}
J_1^{\mathrm{CF}} - J_1^{\mathrm{REF}}
\;=\;
-\,\sum_{k=0}^{N-1} \Big\| \mathcal{H}_k^{-\tfrac12}
\,\bar B_k^\top &\bar P_{k+1} E_w\, w_k \Big\|_2^{\,2}\\
&\;+\; \mathcal{O}\!\big(\|\Delta u_2\|_2^3\big)
\;\le\; 0,
\end{aligned}
\end{equation}
where $\displaystyle \mathcal{H}_k:=R_1+\bar B_k^\top \bar P_{k+1}\bar B_k \succ 0$. Thus, CF never performs worse than REF to second order in the deviation magnitude and strictly improves unless $w_k\equiv0$.
\end{corollary}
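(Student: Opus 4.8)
The plan is to view the REF policy as a feasible (but generally suboptimal) member of the causal affine class \eqref{eq:policy-class}, and then invoke the nonnegative gap identity \eqref{eq:gap} from Theorem~\ref{thm:lag-optimal}. Concretely, note that $u_{1,k}^{\mathrm{REF}}=-K^{\star}_{1,k}x_k$ corresponds to the choice $\tilde K_k=K^{\star}_{1,k}$, $\tilde L_k\equiv 0$ in the augmented formulation, i.e. $\tilde u_{1,k}=-[\,K^{\star}_{1,k}\ \ 0\,]z_k$. Plugging this into \eqref{eq:gap} gives
\[
J_1^{\mathrm{REF}}-J_1^{\mathrm{CF}}
=\sum_{k=0}^{N-1}\big(\tilde u_{1,k}-u_{1,k}^{\mathrm{CF}}\big)^\top
\mathcal H_k\big(\tilde u_{1,k}-u_{1,k}^{\mathrm{CF}}\big)\ \ge 0,
\]
which already establishes the inequality $J_1^{\mathrm{CF}}-J_1^{\mathrm{REF}}\le 0$ exactly (not just to second order). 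So the real content of the corollary is the \emph{quantitative} second-order expansion of this gap in terms of $w_k$.

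For that, I would compute the per-stage discrepancy $\tilde u_{1,k}-u_{1,k}^{\mathrm{CF}}$ explicitly. Since $u_{1,k}^{\mathrm{CF}}=-K_k x_k - L_k w_k$ with $[\,K_k\ \ L_k\,]=\mathcal H_k^{-1}\bar B_k^\top\bar P_{k+1}\bar A_k$, we get
\[
\tilde u_{1,k}-u_{1,k}^{\mathrm{CF}}
=(K_k-K^{\star}_{1,k})\,x_k + L_k w_k.
\]
The key observation is a bookkeeping one about orders of magnitude: when $w_k\equiv 0$ the compensated dynamics reduce to the nominal closed loop and the optimal state-feedback gain $K_k$ of the augmented problem restricted to the $x$-block coincides with $K^\star_{1,k}$ (the REF law is optimal in the undisturbed game against Player~2's Nash feedback). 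Hence $K_k-K^{\star}_{1,k}$ depends on the $w$-coupling only and, along the realized trajectory, $x_k=x_k^\star+\Delta x_k$ with $\Delta x_k=\mathcal O(\|\Delta u_2\|_2)$; more to the point, the entire term $(K_k-K^{\star}_{1,k})x_k$ is $\mathcal O(\|\Delta u_2\|_2)$ while its contribution to the quadratic form, together with cross terms, is absorbed into the $\mathcal O(\|\Delta u_2\|_2^3)$ remainder once one keeps only the leading quadratic piece $L_k w_k$. Writing $w_k=E_w z_k$ and $L_k = E_w[\,K_k\ \ L_k\,]^{\!\top\!\top}$—more cleanly, $L_k w_k = (\mathcal H_k^{-1}\bar B_k^\top\bar P_{k+1}\bar A_k)E_w^\top E_w z_k$ evaluated so that only the $\bar P_{k+1}E_w w_k$ part survives to leading order—the dominant term of $\mathcal H_k^{1/2}(\tilde u_{1,k}-u_{1,k}^{\mathrm{CF}})$ is $\mathcal H_k^{-1/2}\bar B_k^\top\bar P_{k+1}E_w w_k$, and squaring and summing yields \eqref{eq:local-gain-lag}.

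The main obstacle is the second-order accounting: one must argue carefully that the $(K_k-K^{\star}_{1,k})x_k$ term, the cross term between it and $L_k w_k$, and the dependence of $L_k$, $\mathcal H_k$, $\bar P_{k+1}$ on the nominal (deviation-independent) data all combine so that the error beyond the displayed quadratic is genuinely $\mathcal O(\|\Delta u_2\|_2^3)$. This requires (i) establishing that $K_k\to K^{\star}_{1,k}$ as the $w$-channel is switched off, so $(K_k-K^{\star}_{1,k})$ multiplies $w_k$-order quantities, and (ii) noting that $w_k=\mathcal O(\|\Delta u_2\|_2)$ by hypothesis so every product of three such factors is cubic. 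The sign and the strict-improvement claim then follow immediately: $\mathcal H_k\succ 0$ forces each summand $\|\mathcal H_k^{-1/2}\bar B_k^\top\bar P_{k+1}E_w w_k\|_2^2\ge 0$, the total is $\le 0$, and it vanishes to this order only if $\bar B_k^\top\bar P_{k+1}E_w w_k=0$ for all $k$, which (under the nondegeneracy already assumed, ensuring $\bar B_k^\top \bar P_{k+1}$ has no spurious null directions along the disturbance channel) happens exactly when $w_k\equiv 0$.
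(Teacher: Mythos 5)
Your overall route is the same as the paper's: treat REF as the feasible point $(\tilde K_k,\tilde L_k)=(K^\star_{1,k},0)$ in the causal affine class \eqref{eq:policy-class}, invoke the completion-of-squares gap identity \eqref{eq:gap}, and expand the per-stage discrepancy $\tilde u_{1,k}-u_{1,k}^{\mathrm{CF}}=(K_k-K^\star_{1,k})x_k+L_kw_k$. Your observation that the gap identity already yields $J_1^{\mathrm{CF}}\le J_1^{\mathrm{REF}}$ exactly, with no expansion needed, is correct and is a genuine strengthening of the sign claim. You also deserve credit for isolating the term $(K_k-K^\star_{1,k})x_k$, which the paper's proof passes over silently.

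The gap is precisely in how you dispose of that term. First, the premise of your step (i) is not available: $K_k$ and $L_k$ are generated by the augmented Riccati recursion \eqref{eq:riccati-aug} with the matrices \eqref{eq:ABQ-def}, which depend only on $(A,B_1,B_2,K^\star_{2,\cdot},\alpha,Q_1,R_1)$ and not on the realized deviation, so $K_k-K^\star_{1,k}$ is a \emph{fixed} matrix that does not shrink as $\|\Delta u_2\|_2\to0$. Because $\bar A_k$ and $\bar B_k$ carry nonzero $w$-row blocks, the $(1,1)$ block of $\bar P_k$ does not reduce to the nominal Riccati recursion; Remark~\ref{rem:CF-homog} states explicitly that $K_k\neq K^\star_{1,k}$ in general, with equality only in the slowly-varying specialization of Corollary~\ref{cor:slow-varying-alpha}. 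Second, even granting your estimate $(K_k-K^\star_{1,k})x_k=\mathcal O(\|\Delta u_2\|_2)$, its square contributes at order $\|\Delta u_2\|_2^2$, the \emph{same} order as the displayed leading term $\|\mathcal H_k^{1/2}L_kw_k\|_2^2$, so it cannot be absorbed into an $\mathcal O(\|\Delta u_2\|_2^3)$ remainder; your own bookkeeping is inconsistent here. Closing the argument requires either $K_k\equiv K^\star_{1,k}$ (the specialization) or carrying the state-gain mismatch explicitly. (The paper's proof makes the same silent identification $\tilde u_{1,k}-u^\star_{1,k}=L_kw_k+\mathcal O(\|\Delta u_2\|_2^2)$, so you are reproducing its leap rather than introducing a new one, but your justification does not repair it.) A further unresolved point: $\mathcal H_k^{1/2}L_kw_k=\mathcal H_k^{-1/2}\bar B_k^\top\bar P_{k+1}\bar A_kE_w^\top w_k$, which carries the second block column of $\bar A_k$ rather than $E_w^\top$ itself, so your remark that ``only the $\bar P_{k+1}E_ww_k$ part survives'' does not actually reconcile $L_kw_k$ with the quadratic form displayed in \eqref{eq:local-gain-lag}.
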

\begin{proof}
By Theorem~\ref{thm:lag-optimal}, the stagewise completion of squares gives, for any
candidate $\tilde u_{1,k}$,
$\mathcal{Q}_k(z_k,\tilde u_{1,k})-\mathcal{Q}_k(z_k,u_{1,k}^\star)
=\|\mathcal{H}_k^{1/2}(\tilde u_{1,k}-u_{1,k}^\star)\|_2^2$.
Evaluating this at the REF choice $\tilde u_{1,k}=-K_{1,k}^\star x_k$ yields
$\tilde u_{1,k}-u_{1,k}^\star= L_k w_k$ plus higher-order state terms induced by
$\Delta u_2$, which are $\mathcal{O}(\|\Delta u_2\|_2^2)$. Summing over $k$ gives
\eqref{eq:local-gain-lag} with the remainder $\mathcal{O}(\|\Delta u_2\|_2^3)$ and completes this proof.
\end{proof}



\subsection{Specialization: Slowly-Varying Error Dynamics}
When the FNE command $u_{2,k}^\star$ varies slowly, the drive term becomes negligible and $w_{k+1}=\alpha w_k$. The CF gains then admit closed forms in terms of the block Riccati sequence.

\begin{corollary}
\label{cor:slow-varying-alpha}
Assume the lag model \eqref{eq:w-lag} with $0<\alpha<1$ and that Player~2’s Nash command varies slowly so that
\begin{equation}\label{eq:slow-drive}
u^{\star}_{2,k+1}\approx u^{\star}_{2,k}
\end{equation}
hence the drive is negligible and $w_{k+1} = \alpha\,w_k$. Then the optimal gains $K_k$ and $L_k$ defined in Theorem~\ref{thm:lag-optimal} will be obtained as
\begin{equation}
\begin{aligned}
&K_k \;=\; H_k^{-1} B_1^\top P_{k+1}^{xx} A_{\mathrm{eff},k}, \\
&L_k \;=\; H_k^{-1} B_1^\top \big(P_{k+1}^{xx} + \alpha\,P_{k+1}^{xw}\big), \label{eq:special case k}
\end{aligned}
\end{equation}
where $H_k \;:=\; R_1 + B_1^\top P_{k+1}^{xx} B_1 \ \succ 0$; $P_{k}^{xx}$ satisfies the standard finite-horizon Riccati recursion
\begin{equation}\label{eq:Riccati-Pxx}
\begin{aligned}
&P_N^{xx}=Q_{1,N},\\
&\begin{aligned}
P_k^{xx}=Q_1 + &A_{\mathrm{eff},k}^\top P_{k+1}^{xx} A_{\mathrm{eff},k}\\
&- A_{\mathrm{eff},k}^\top P_{k+1}^{xx} B_1 H_k^{-1} B_1^\top P_{k+1}^{xx} A_{\mathrm{eff},k},
\end{aligned}
\end{aligned}
\end{equation}
and the cross block $P_k^{xw}$ evolves linearly as
\begin{equation}\label{eq:Riccati-Pxw}
\begin{aligned}
&P_N^{xw}=0,\\
&\begin{aligned}
P_k^{xw} =\; &A_{\mathrm{eff},k}^\top P_{k+1}^{xx}
+ \alpha\,A_{\mathrm{eff},k}^\top P_{k+1}^{xw}\\
&- A_{\mathrm{eff},k}^\top P_{k+1}^{xx} B_1 H_k^{-1} B_1^\top \big(P_{k+1}^{xx}+\alpha\,P_{k+1}^{xw}\big).
\end{aligned}
\end{aligned}
\end{equation}
\end{corollary}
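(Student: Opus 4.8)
The plan is to specialize Theorem~\ref{thm:lag-optimal} by substituting the degenerate disturbance model into the augmented data and then reading off the block structure of the resulting Riccati recursion. First I would invoke the slow-variation hypothesis \eqref{eq:slow-drive}: it makes the drive term $B_2\big(K^{\star}_{2,k+1}x_{k+1}-K^{\star}_{2,k}x_k\big)$ in \eqref{eq:w-lag} vanish, so $w_{k+1}=\alpha w_k$. Propagating this through the construction in \eqref{eq:ABQ-def} collapses the augmented matrices to the block-upper-triangular forms
\[
\bar A_k=\begin{bmatrix}A_{\mathrm{eff},k} & I\\[2pt] 0 & \alpha I\end{bmatrix},\qquad
\bar B_k=\begin{bmatrix}B_1\\[2pt] 0\end{bmatrix},
\]
while $\bar Q,\bar Q_N$ remain as in \eqref{eq:Q-aug}. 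Theorem~\ref{thm:lag-optimal} then applies verbatim to this augmented LQ problem, so it only remains to expand \eqref{eq:gains-closed} and \eqref{eq:riccati-aug} in $2\times 2$ blocks.

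Next I would partition $\bar P_k=\begin{bmatrix}P_k^{xx} & P_k^{xw}\\ (P_k^{xw})^\top & P_k^{ww}\end{bmatrix}$ conformally with $z_k=[x_k^\top\ w_k^\top]^\top$. Because the $w$-block of $\bar B_k$ is zero, $\bar B_k^\top\bar P_{k+1}\bar B_k=B_1^\top P_{k+1}^{xx}B_1$, so $\mathcal H_k=R_1+\bar B_k^\top\bar P_{k+1}\bar B_k$ coincides with $H_k$ in the statement, and $H_k\succ 0$ follows from $R_1\succ 0$ together with $P_{k+1}^{xx}\succeq 0$ (the diagonal block of $\bar P_{k+1}\succeq 0$). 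A direct multiplication gives $\bar B_k^\top\bar P_{k+1}\bar A_k=\big[\,B_1^\top P_{k+1}^{xx}A_{\mathrm{eff},k}\ \ \ B_1^\top\!\big(P_{k+1}^{xx}+\alpha P_{k+1}^{xw}\big)\,\big]$, and inserting this into \eqref{eq:gains-closed} and partitioning $z_k$ yields exactly \eqref{eq:special case k}.

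It then remains to verify that the block partition is invariant under \eqref{eq:riccati-aug} and that the blocks obey \eqref{eq:Riccati-Pxx} and \eqref{eq:Riccati-Pxw}. Using the triangular $\bar A_k$, the $(x,x)$ entry of $\bar A_k^\top\bar P_{k+1}\bar A_k$ is $A_{\mathrm{eff},k}^\top P_{k+1}^{xx}A_{\mathrm{eff},k}$, and the $(x,x)$ entry of the quadratic correction term is $A_{\mathrm{eff},k}^\top P_{k+1}^{xx}B_1 H_k^{-1}B_1^\top P_{k+1}^{xx}A_{\mathrm{eff},k}$; adding $Q_1$ from $\bar Q$ reproduces \eqref{eq:Riccati-Pxx}, with $P_N^{xx}=Q_{1,N}$ from \eqref{eq:Q-aug}. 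Likewise the $(x,w)$ entries give $A_{\mathrm{eff},k}^\top\big(P_{k+1}^{xx}+\alpha P_{k+1}^{xw}\big)$ for the first term and $A_{\mathrm{eff},k}^\top P_{k+1}^{xx}B_1 H_k^{-1}B_1^\top\big(P_{k+1}^{xx}+\alpha P_{k+1}^{xw}\big)$ for the correction, and since the $(1,2)$ block of $\bar Q$ is zero this yields \eqref{eq:Riccati-Pxw} with $P_N^{xw}=0$. In particular the recursion for $P^{xx}$ closes on itself and that for $P^{xw}$ is linear once $P^{xx}$ is known, so the $(w,w)$ block — which enters neither the gains nor the other two blocks — can be ignored, completing the argument.

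I do not expect a genuine obstacle here: the derivation is block bookkeeping layered on top of Theorem~\ref{thm:lag-optimal}. The one point requiring care is consistency of the approximation — the drive term must be dropped \emph{everywhere} it appears in \eqref{eq:ABQ-def} (both in the $(w,x)$ and $(w,w)$ blocks of $\bar A_k$ and in the $w$-block of $\bar B_k$), which is precisely what $w_{k+1}=\alpha w_k$ encodes — and confirming that the partition of $\bar P_k$ is preserved by \eqref{eq:riccati-aug}, i.e.\ that no coupling beyond the stated triangular one is generated; both facts are immediate consequences of the block-triangular structure of $\bar A_k$ and the zero $w$-block of $\bar B_k$.
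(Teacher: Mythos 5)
Your proposal is correct and follows essentially the same route as the paper's proof: substitute the slow-variation hypothesis to collapse $\bar A_k$ and $\bar B_k$ to the block-triangular forms, partition $\bar P_{k+1}$, and read off the gains from \eqref{eq:gains-closed} and the block recursions from the $(1,1)$ and $(1,2)$ entries of \eqref{eq:riccati-aug}, with $P_N^{xw}=0$ following from the block-diagonal $\bar Q_N$. Your additional remarks on dropping the drive term consistently and on the invariance of the block partition are sound and only make explicit what the paper leaves implicit.
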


\begin{proof}
The assumption \eqref{eq:slow-drive} implies $w_{k+1}=\alpha w_k$. In augmented coordinates
$z_k=[x_k^\top \; w_k^\top]^\top$ this gives
\[
\bar A_k=\begin{bmatrix}A_{\mathrm{eff},k}& I\\ 0 & \alpha I\end{bmatrix},\qquad
\bar B_k=\begin{bmatrix}B_1\\ 0\end{bmatrix}.
\]
By Theorem~\ref{thm:lag-optimal}, the optimal control \eqref{eq:ukstar} can be simplified by partition $\bar P_{k+1}=\begin{bmatrix}P_{k+1}^{xx}&P_{k+1}^{xw}\\ P_{k+1}^{wx}&P_{k+1}^{ww}\end{bmatrix}$ and computing the terms
\begin{equation*}
\bar B_k^\top \bar P_{k+1}\bar B_k \;=\; B_1^\top P_{k+1}^{xx} B_1,
\end{equation*}
\begin{equation*}
\bar B_k^\top \bar P_{k+1}\bar A_k
= B_1^\top\!\big[\,P_{k+1}^{xx}A_{\mathrm{eff},k}\ \ \ P_{k+1}^{xx}+\alpha P_{k+1}^{xw}\,\big].
\end{equation*}
Defining $H_k := R_1 + B_1^\top P_{k+1}^{xx} B_1 \succ 0$, 
\begin{equation}
\begin{aligned}
u_{1,k}^\star
&= -\,H_k^{-1}\,B_1^\top\!\big[\,P_{k+1}^{xx}A_{\mathrm{eff},k}\ \ \ P_{k+1}^{xx}+\alpha P_{k+1}^{xw}\,\big]
\begin{bmatrix}x_k\\ w_k\end{bmatrix}\\[2pt]
&\;\begin{aligned}
= &-\,\underbrace{H_k^{-1}B_1^\top P_{k+1}^{xx}A_{\mathrm{eff},k}}_{\displaystyle K_k}\,x_k\\
&\;-\;\underbrace{H_k^{-1}B_1^\top\big(P_{k+1}^{xx}+\alpha P_{k+1}^{xw}\big)}_{\displaystyle L_k}\,w_k,
\end{aligned}
\end{aligned}
\end{equation}
which yields \eqref{eq:special case k}. Taking the $(1,1)$ and $(1,2)$ blocks of the Riccati recursion
\eqref{eq:riccati-aug} with the above $\bar A_k,\bar B_k$ gives the stated recursions
\eqref{eq:Riccati-Pxx}–\eqref{eq:Riccati-Pxw} and $P_N^{xw}=0$ follows from $\bar Q_N$ being block-diagonal.
\end{proof}

Comparing the Riccati recursion $P^{xx}_k$ in \eqref{eq:Riccati-Pxx}, and the gain formulation $K_k$ in \eqref{eq:special case k} with the ones in Definition~\ref{def: definition of Nash} reveals that the state gain $K_k$ obtained under the assumption of Corollary~\ref{cor:slow-varying-alpha} is equal to that of the FNE.

It is worth noting that \eqref{eq:Riccati-Pxx} coincides with the standard finite-horizon Riccati recursion for the disturbance-free system. In particular, $K_k$ matches the usual state-feedback gain derived from $P_k^{xx}$, while $L_k$ adds a causal feedforward term shaped by both $P_{k+1}^{xx}$ and the cross block $P_{k+1}^{xw}$ scaled by $\alpha$. The resulting law $u_{1,k}^\star=-K_k x_k - L_k w_k$ thus has the familiar structure of disturbance-accommodation or preview control: the controller reacts not only to the state but also to the measured disturbance. Under the lag generator \eqref{eq:w-lag}, the augmented Riccati \eqref{eq:riccati-aug} introduces off-diagonal couplings between $x$ and $w$; these generate a preview component in $L_k$ that anticipates how current deviations propagate through the lag into future stages. Hence \eqref{eq:gains-closed} recovers the classical disturbance-accommodation principle and extends it to physically grounded, time-varying deviation dynamics.

\begin{remark}
To sanity-check the design, we examine the two extreme points of the lag parameter 
$\alpha$ in the implemented-command model \eqref{eq:u2-lag}. 
When $\alpha\to 0$, the actuator tracks its Nash command essentially instantaneously, so 
$u_{2,k+1}\approx u^{\star}_{2,k}$ and the disturbance recursion reduces to 
$w_{k+1}=-B_2(u^{\star}_{2,k+1}-u^{\star}_{2,k})$. In steady segments the disturbance vanishes, 
and the compensated feedback coincides with the classical disturbance-accommodation (DBSC) law 
that uses $w_k$ as an immediate feedforward signal. At the other extreme, when $\alpha\to 1$ and 
the drive term is negligible, the disturbance behaves like a nearly constant bias with 
$w_{k+1}\approx w_k$, so the compensation acts as preview cancellation of a persistent input. 
Both cases are contained as limits of the proposed lag model and show that the compensated law 
weakly dominates any $K$-only feedback whenever a measurable deviation is present.
\end{remark}

\begin{remark}
\label{rem:CF-homog}
Under the CF policy for Player~1 \eqref{eq:policy-class} and the feedback policy of Player~2 \eqref{eq:feedback nash definition}, the closed-loop state update can be written as
\[
x_{k+1}=\big(A-B_1K_{1,k}-B_2K^\star_{2,k}\big)\,x_k - B_1L_k w_k.
\]
Hence, even for $w_k\equiv 0$, the homogeneous (zero-disturbance) dynamics are
\[
x_{k+1}=\big({A-B_1K_{1,k}-B_2K^\star_{2,k}}\big)\,x_k,
\]
which generally differ from the FNE dynamics
$x_{k+1}=\big(A-B_1K^\star_{1,k}-B_2K^\star_{2,k}\big)\,x_k$.
Stability and performance must therefore be assessed for the Linear Time-Varying (LTV) matrix sequence produced by the augmented Riccati recursion.
\end{remark}

It is noteworthy that the CF policy is not a Nash equilibrium for the modified interaction, since Player~2 is not best-responding to CF. Joint redesign of both players under implementation dynamics is more involved and is left for future work.

The next section presents a numerical example illustrating the practical utility of Theorems~\ref{thm:lag-optimal}.

\section{Numerical Example: Two Carts Coupled by a Spring--Damper}
\label{sec:numerics}

We consider a two-player, finite-horizon, discrete-time LQ dynamic game with feedback Nash equilibrium (FNE) strategies. Two equal-mass carts are connected by a spring--damper; Player~1 and Player~2 actuate Cart~1 and Cart~2, respectively. Each player regulates its cart to the origin while penalizing control effort. We then impose a realistic execution-lag deviation at Player~2 and show that Player~1 can employ the CF from Theorem~\ref{thm:lag-optimal}, computed via an augmented Riccati recursion, to counteract that structured disturbance while Player~2 remains at FNE.
\begin{figure}
    \centering
    \includegraphics[width=1\linewidth]{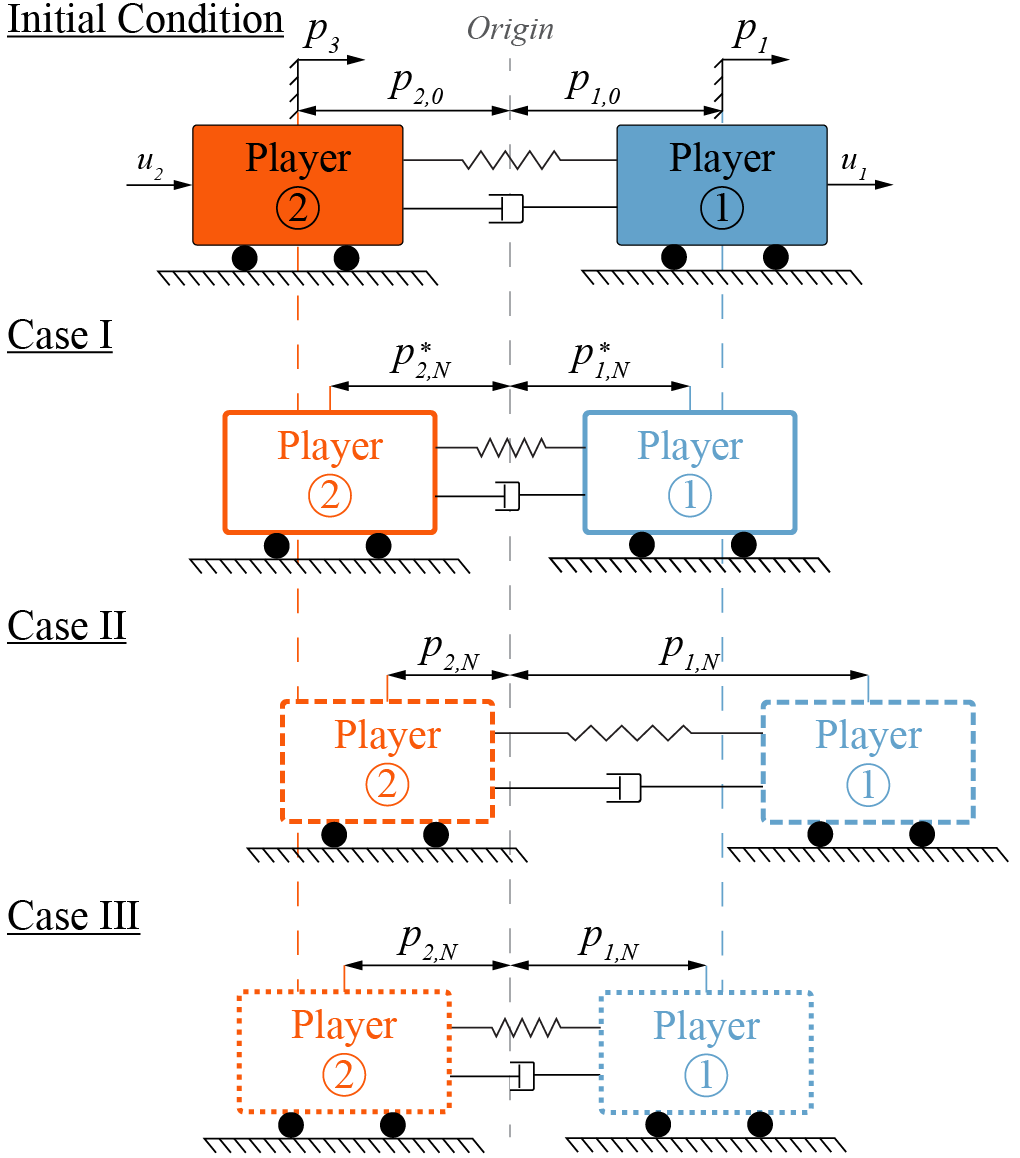}
    \caption{Schematic of the two-cart setup and qualitative final positions under the three cases. 
Case~I (FNE) shows near-symmetric convergence near the origin. 
Case~II (REF) illustrates how Player~2’s actuator lag degrades both players’ positions relative to nominal.  
Case~III (CF) shows that the compensated policy enables Player~1 to mitigate the error and approach its nominal outcome, while Player~2 remains misaligned due to its uncompensated lag.}
    \label{fig:system}
\end{figure}
Let $x:=\begin{bmatrix}p_1 & v_1 & p_2 & v_2\end{bmatrix}^\top$, where $p_1$ and $p_2$ are the cart positions, and $v_1$ and $v_2$ are the corresponding  velocities, and $u_1$ and $u_2$ denote the longitudinal actuation forces applied by Player~1 and~2, respectively. 
The continuous-time system dynamics are $\dot x = A_c x + B_{1,c} u_1 + B_{2,c} u_2$ with the matrices:
\begin{equation*}
\begin{aligned}
&A_c = \begin{bmatrix}
0&1&0&0\\[2pt]
-\frac{k}{m}&-\frac{c}{m}&\frac{k}{m}&\frac{c}{m}\\[2pt]
0&0&0&1\\[2pt]
\frac{k}{m}&\frac{c}{m}&-\frac{k}{m}&-\frac{c}{m}
\end{bmatrix},\\
&B_{1,c}=\begin{bmatrix}0\\ \frac{1}{m}\\ 0\\ 0\end{bmatrix},\quad
B_{2,c}=\begin{bmatrix}0\\ 0\\ 0\\ \frac{1}{m}\end{bmatrix}.
\end{aligned}
\end{equation*}
We discretize the dynamics via forward Euler at sampling $\Delta T$. Each player minimizes the quadratic cost in \eqref{eq:LQ cost function} over a horizon $N$ with the weights $Q_1=\mathrm{diag}(10,\,2,\,0,\,0)$, $Q_2=\mathrm{diag}(0,\,0,\,10,\,2)$, and $R_1=R_2=[\,0.05\,]$.

For error identification purposes, let $u_{2,k}^\star=-K_{2,k}^\star x_k$ denote Player~2’s intended (FNE) command, and let $u_{2,k}$ be the applied input after a first-order discrete lag \eqref{eq:u2-lag} with the $\alpha$ defined in Remark~\ref{rem:alpha-tau}. The parameters used are listed in Table~\ref{tab:params}.

\begin{table}[ht]
\centering
\caption{Parameters used in the numerical simulations.}
\begin{tabular}{lcl}
\hline
\multicolumn{3}{c}{\textbf{Model parameters}} \\
\hline
$m$       & $1.0$              & Cart mass [kg] \\
$k$       & $0.12$             & Spring constant [N/m] \\
$c$       & $30$               & Damping coefficient [N$\cdot$s/m] \\
$\Delta T$& $0.015$            & Sampling period [s] \\
$N$       & $40$               & Horizon length [steps] \\
\hline
\multicolumn{3}{c}{\textbf{Error/lag parameters}} \\
\hline
$\tau$    & $0.8$              & Lag time constant [s] \\
$\alpha$  & $e^{-\Delta T/\tau}\approx 0.9814$ & Discrete lag factor [–] \\
\hline
\multicolumn{3}{c}{\textbf{Initial conditions}} \\
\hline
$x_0$     & $\begin{bmatrix}0.5 & 0 & -0.5 & 0\end{bmatrix}^\top$
          & Initial state [m, m/s, m, m/s] \\
\hline
\end{tabular}
\label{tab:params}
\end{table}

\subsection*{Test Cases}
We compare three implementations over the same horizon and initial condition.
\emph{\textbf{Case~I (FNE)}:} both players apply their finite-horizon feedback Nash laws computed from the coupled Riccati recursion; no actuator lag is present, and Player~2’s intended command is applied directly.
\emph{\textbf{Case~II (REF)}:} Player~2’s intended command is subjected to a first-order discrete lag with factor $\alpha$, but Player~1 continues to use its nominal feedback policy; this isolates the performance loss due to the execution lag.
\emph{\textbf{Case~III (CF)}:} the same actuator lag is present on Player~2, while Player~1 designs a compensated feedback by solving a finite-horizon Riccati recursion on the exact augmented realization of the lagged plant, yielding time-varying gains that act on the state and on the lag-induced mismatch signal. Figure~\ref{fig:system} summarizes the two-cart configuration and qualitatively contrasts the resulting final positions under the three cases.

Figure~\ref{fig:state-input comparison} compares the state and input trajectories under the three scenarios. 
In the nominal FNE (Case~I, solid lines), both players regulate their carts symmetrically to the origin with smooth, coordinated actions. 
With a first-order execution lag at Player~2 but no compensation (Case~II, dashed), the applied input $u_2$ deviates from the intended FNE signal (step-like due to the zero-order hold), producing a sluggish response in $(p_2,v_2)$ and, through the spring--damper coupling, a degraded transient in $(p_1,v_1)$. 
This mismatch yields a significant cost increase for both players (Table~\ref{tab:costs}).

\begin{figure}
    \centering
    \includegraphics[width=0.95\linewidth]{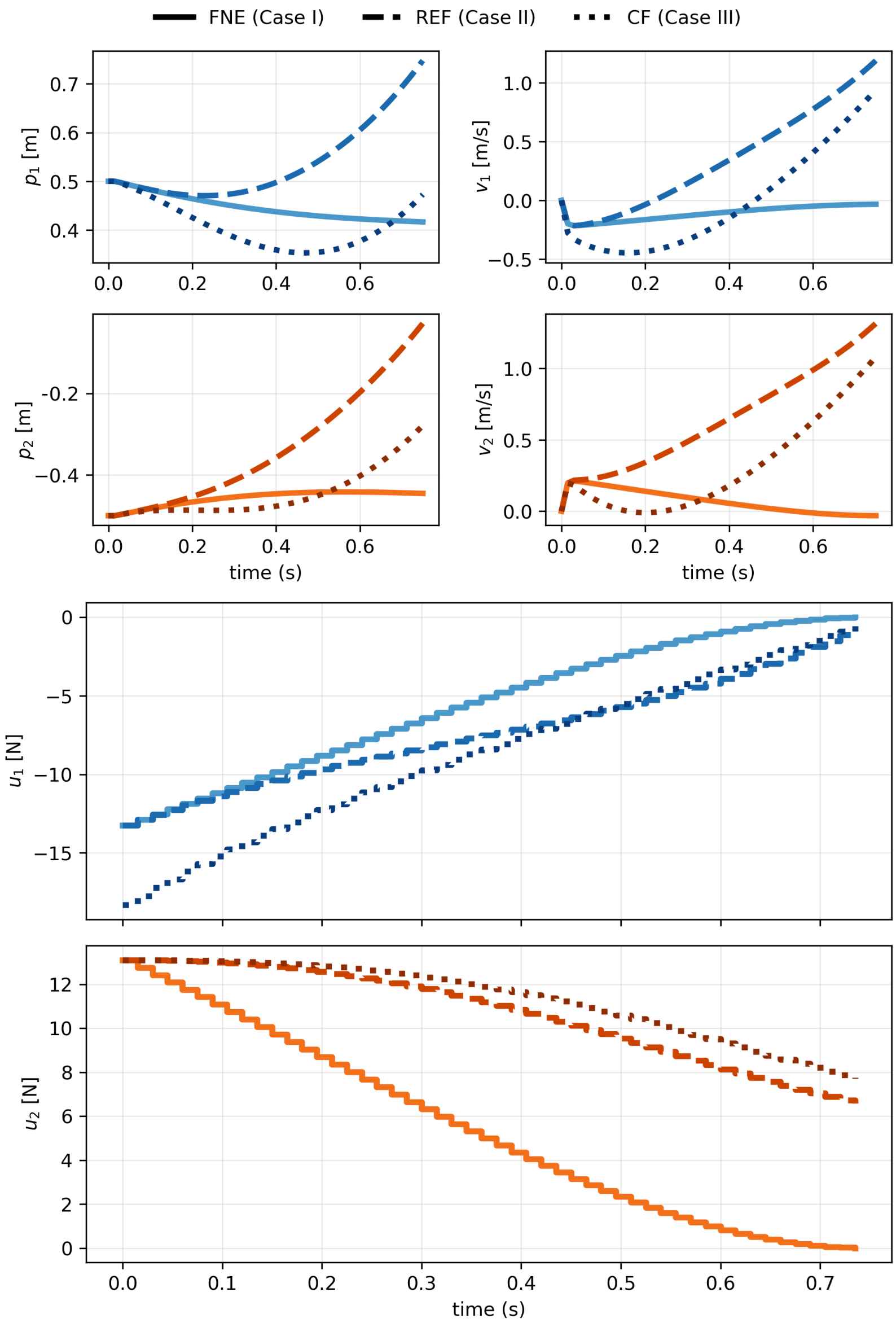}
    \caption{State trajectories $(p_1,v_1,p_2,v_2)$ and applied inputs under the three cases. The compensated feedback (CF) attenuates the effect of Player~2’s actuator lag by shaping a mismatch-aware control for Player~1.}
    \label{fig:state-input comparison}
\end{figure}

Under the compensated feedback (Case~III, dotted), Player~1 modifies its input relative to the nominal FNE action to proactively counteract the actuation shortfall of Player~2. 
The resulting trajectories stay appreciably closer to the FNE benchmark, especially for Player~1, confirming that CF recovers a substantial fraction of the lag-induced performance loss. 
Player~2’s cost, however, rises further because it neither alters its strategy nor removes the actuator lag, highlighting the asymmetric benefit in a general-sum game.

Numerically, Table~\ref{tab:costs} shows that Player~1's cost $J_1$ increases by $51.36\%$ from Case~I to Case~II, when lag is present without compensation. Under CF in Case~III, Player~1's cost decreases to $439.67$, yielding a $10.71\%$ improvement relative to the uncompensated lagged case and recovering about $31.56\%$ of the nominal-to-lag performance loss. At the same time, Player~2's cost $J_2$ increases from $492.11$ in Case~II to $556.40$ in Case~III.

\begin{table}[h]
\caption{Finite-horizon performance under the three cases ($\tau=0.8$\,s, $\alpha\approx0.9814$). CF (Case~III) reduces Player~1’s lag penalty
relative to Case~II.}
\centering
\begin{tabular}{lccc}
\hline
 & Case I: FNE & Case II: Lag, no comp. & Case III: CF \\
\hline
$J_1$        & 325.31   & 492.40   & \textbf{439.67} \\
$J_2$        & 330.45   & 492.11   & 556.40 \\
\hline
$\mathrm{ISE}_1$ & 0.1655   & 0.4566   & {0.2681} \\
$\mathrm{ISE}_2$ & 0.1689   & 0.5076   & {0.3113} \\
\hline
\end{tabular}
\label{tab:costs}
\end{table}

To further assess transient regulation quality, Table~\ref{tab:costs} also reports the finite-horizon integral squared error (ISE) for each player, where $\mathrm{ISE}_1$ corresponds to Player~1 and $\mathrm{ISE}_2$ corresponds to Player~2. These values show the same qualitative improvement for Player~1: relative to Case~II, $\mathrm{ISE}_1$ is reduced by $41.28\%$ under CF. Interestingly, Player~2's $\mathrm{ISE}_2$ also decreases, by $38.67\%$, even though Player~2's cost increases. This indicates that the increase in $J_2$ is not due solely to poorer regulation. Rather, because Player~2's objective penalizes both state deviation and control effort, the modified interaction can reduce regulation error while still increasing the overall quadratic cost.

However, the total cost is increased. This tradeoff is expected in the present formulation. The proposed CF law is obtained by minimizing Player~1's cost over the causal affine class~\eqref{eq:policy-class} while Player~2 remains fixed at its nominal FNE strategy and still suffers from actuator lag. Therefore, the design is intentionally player-specific rather than socially optimal, and a reduction in $J_1$ does not in general imply a reduction in $J_1+J_2$. Mitigating this efficiency tradeoff would require a joint redesign of both players under implementation dynamics, or alternatively a cooperative/team objective, which lies beyond the scope of the present work.



\section{Conclusion}
\label{sec: conclusion}
In this paper, we analyzed the impact of control implementation errors in finite-horizon, two-player, nonzero-sum LQ games. 
We quantified how deviations from one player’s nominal
feedback Nash equilibrium input perturb the closed-loop
trajectory and affect the other player’s cost. In particular,
we derived explicit first-order sensitivity expressions that
characterize this dependence in the discrete-time setting.

Motivated by this sensitivity analysis, we constructed a compensated feedback law for the influenced player by augmenting the nominal dynamics with measurable deviation dynamics. The resulting controller is optimal over a causal affine policy class and includes the uncompensated equilibrium-derived
feedback as a special case.

A representative numerical example illustrated how the proposed
compensation can recover a substantial portion of the performance
loss induced by actuator lag. These results highlight the value
of exploiting measurable implementation deviations when the
exact execution of nominal Nash policies is not possible.

\subsection*{Future Research Directions}
A promising avenue for future research is the development of online estimation techniques combined with adaptive control laws that can anticipate deviations of the other player from the Nash strategy and dynamically adjust to these errors in real time. Furthermore, examining the effects of more complex deviation patterns, such as time-varying, correlated, or structured errors, could offer deeper insights into the robustness and adaptability of game-theoretic control strategies. 

\section{Acknowledgment}
The authors acknowledge the use of OpenAI’s ChatGPT for structuring content, refining language, and improving clarity. All technical contributions, derivations, and original insights remain the authors' own work.

\bibliographystyle{ieeetr}
\bibliography{references}

\end{document}